\lstdefinestyle{mystyle}{
    commentstyle=\color{codegreen},
    stringstyle=\color{codepurple},
    basicstyle=\footnotesize,
    breakatwhitespace=false,         
    breaklines=true,                 
    captionpos=b,                    
    keepspaces=true,                 
    numbers=left,                    
    numbersep=5pt,                  
    showspaces=false,                
    showstringspaces=false,
    showtabs=false,                  
    tabsize=2,
    framexleftmargin=100mm,
    basicstyle=\ttfamily,
    xleftmargin=1cm
}
\def\CTEX@section@format{\Large\bfseries}
\g@addto@macro{\CTEX@section@format}{\raggedright}
\newtheorem{lemma}{Lemma}
\newtheorem{theorem}{Theorem}
\theoremstyle{property}
\newtheorem{property}{Property}
\newtheorem{corollary}{Corollary}
\title{Black-White Array: A New Data Structure for\\
  Dynamic Data Sets }
\author{Z. George Mou\\
                {\footnotesize gmou5813@gmail.com}
              }
\date{}                                           
\begin{document}
\maketitle


\abstract{

A new array based  data structure named black-white array (BWA) is introduced as an effective and efficient alternative to the
list or tree based data structures for dynamic data set. It consists of two sub-arrays, one white and one
black of half of the size of the white. Both of them are conceptually partitioned into segments of different ranks with the sizes grow in geometric sequence. The  layout of BWA allows easy calculation of the meta-data  about the segments, which are used extensively in the algorithms for the  basic operations of the dynamic sets.

The insertion of a sequence of unordered numbers into BWA takes amortized time logarithmic to the length of the sequence. It is also proven
that when the searched or deleted value is present in the BWA, the  asymptotic amortized cost for the operations  is  $O(\log (n))$; otherwise, the 
time will  fall somewhere between $O(\log(n))$ and $O(\log^2(n))$.

It is shown that the state variable {\tt total}, which records the number of values in the BWA captures the dynamics of state transition of BWA. 
This fact is exploited to produce concise, easy-to-understand, and efficient coding for the operations. As it uses arrays as the 
underlying structure  for  dynamic set,  a BWA need neither the space to store the pointers referencing other data nodes nor the time to chase the pointers as with any linked data structures. 

A C++ implementation of the BWA  is completed. The performance data were gathered and plotted, which confirmed the theoretic analysis. The testing results showed 
 that the amortized time for the insert, search, and delete  operations is all  just between 105.949 and 5720.49 nanoseconds for  BWAs of sizes ranging 
from $2^{10}$ to $2^{29}$ under various conditions.}

\begin{keywords}
data structure and algorithm, dynamic data set, one dimensional space search, space efficiency, complexity analysis, performance, point cloud registration
	
\end{keywords}



\section{Introduction}

The size of a dynamic data set (DDS) may grow and shrink over time with  insert, search, and delete operations \cite{corn90}. The values in DDS are assumed to be drawn from some linearly ordered domains\footnote{More formally, totally ordered sets} such as integers, real numbers, lexicographically ordered words, and keys as in systems such as database.
The linearity of the domain, though may not stated explicitly, is of critical importance  to the implementations of the dynamic data sets.

It can be observed that linked structures, such as  lists and trees, are almost exclusively used in the implementations of dynamic data sets \cite{aho74,sedgewick,corn90,redis,skiplist}. The  performance of the operations in these data structures is generally  tied to the heights of the trees. This  leads  to  the invention of a fairly large family of self-balancing data structures, including  the AVL tree by Georgy and Evgenii (1962) \cite{avl}, B-tree by Bayer and McCreight \cite{btree}, 2-3 Tree by Hopcroft  (1970) \cite{aho74}, Red-black tree by
Guibas and  Sedgewick (1978) \cite{rbtree} and many of their variants \cite{corn90}.

These tree based self-balancing data structures for dynamic data sets  all can perform the basic operations in time  logarithmic to  the number of the values in the structure, and in this sense they can be said to be optimal\footnote{The height of a balanced  tree with $n$ nodes has  necessarily a  height of $\Omega(n)$.}. However,  the flexibility of the tree structures does come at a price. First of all, additional space is needed for each data node  to store the pointers to the children,  parent (except the root), and some metadata related to the tree structure; 
secondly, extra time is required to perform the referencing and dereferencing of the data nodes since the values at the nodes can only be accessed via 
their  parents or children (pointer chasing); and finally, even though an elaborate scheme for tree rebalancing  may not affect the  asymptotic complexity, it could have yield some rather large constant  coefficient  in front of the logarithmic function.

This brings up the question: is it possible to use structures other than lists or trees, e.g. arrays, to implement a dynamic data  set with performance comparable to, or even in some aspects better than, those solutions based on linked structures? 

The work reported here is, in part, an intellectual inquiry into the question, and, in part, a exploration aimed at  providing a fast solution
to the real-world problem of point cloud registration \cite{pointcloud}.  The Iterative Closest  Point  (ICP) method \cite{icp} for  the problem calls for repeated neighborhood search in a three  dimensional space. For better efficiency, main stream method with  $k$-d trees \cite{kdtree}  for 3-d space search \cite{pcl} was not adopted.  Instead, a dimensional shuffle transform (DST) was  proposed \cite{mou-10pb, mou-13pb,mou-19pa,mou-18pct, mou-201b}, that reduces the  search of a region  in a high dimensional space \cite{sedgewick,samet} to a small number of regional search in one  dimensional space. This has been proven to lead to   order(s) of magnitude  improvement in searching speed by bypassing the high cost  normally associated with the maintenance, traversal, and rebalancing of  tree structures, which, in this case, is the $k$-d tree. The regional search over one dimensional space calls for DDS data structures. The idea of black-white array  was then conceived and experimented with \cite{mou-202}, which has  proven to be superior in performance to the list and tree based solutions  that were examined at the time\footnote{In particular, the skip list in Redis \cite{redis,skiplist} and red-black tress \cite{rbtree}.}.

 The paper is organized as follows. In Section \ref{sec:layout}, the layout of the black-white array is introduced; the algorithms in pseudocode for the operations are provided in Section \ref{sec:operations}; the asymptotic complexity of the operations are 
formally analyzed in Section \ref{subsec:analysis};  a C++ implementation and its  performance data are presented in Section \ref{subsec:benchmark}. Some extensions to the black-white array with augmented operations are discussed in Section \ref{sec:augmentations}; finally, concluding remarks are given in Section
\ref{sec:conclusion}.

\section{Layout}
\label{sec:layout}

A {\it black-white array} (BWA) of size $N = 2^k$ consists of a pair of arrays, the {\it black}  (B) and the {\it white} (W), with lengths of $N/2$ and $N$ respectively. 
The black array's indices ranges from 1 to $\frac{1}{2} N-1$, while the white from $1$ to $N-1$.

Both the black and white arrays are conceptually divided into segments of different ranks, where the segment  of rank $i$ contains the entries with indices in the closed interval of $[2^i, 2^{i+1} -1]$ for $i = 0$ to $k-1$. It follows that a segment of rank $i$ contains exactly $2^i$ entries,  and there are $k-1$ and $k$  segments in the black and white arrays respectively. An illustration of the BWA layout is provided in Figure \ref{fig:layout}.

Indexing of the $i$th entry of the black and white arrays will be written as $B[i]$ and $W[i]$ respectively. 
The starting and terminating indices of a segment of rank $j$ are respectively denoted by $S(j)$ and $T(j)$. It is easy to see that $S(j) = 2^j$, and $T(j) = 2^{j+1}-1$, which can be calculated with $S(j) = 1 \ll $ j, and $T(j) = (1 \ll (j+1)) -1$ respectively, where \enquote{$\ll $}  is the left shift of a (binary form of the) integer to the left by  the number of positions shown on the right side.

The black array is used only during the transitional states of the BWA, and at any stable state of the BWA, all the meaningful values are held in some segments of the white array. We say that a segment is {\it active} if the white segment with that rank is holding meaningful values. As will be seen,  it is important to decide  if a segment of a given rank is active at a given time.

A state variable  named {\tt total} is used to keep track  the number of values stored in the BWA\footnote{VOID produced by a delete operation is counted as a value, though special.}.  It turns  out that this variable captures the configuration of the segments in the sense that  a segment of rank $ i $ is active if and only if the $i$th least significant bit 
of {\tt total} has a value of one. 
We therefore can decide if segment of rank  $i $ is active 
by a simple micro: {\tt active} (i) = {\tt total}   \& (1 $\ll$ i), where \enquote{\&} is the bitwise AND,  which returns a non-zero value if and only if the segment of rank $i$ is active.

\begin{figure}[h]
\begin{center}
\resizebox{0.87\textwidth}{!} {      

\begin{tikzpicture}[scale=0.99]


\draw (0,4) rectangle (16,5);  
\draw [fill=gray] (8,3) rectangle (16,4); 

\draw[line width=0.4mm, black] (0,4) -- (16,4);  


\draw(1,4) -- (1,5);
\draw(2,4) -- (2,5);
\draw(3,4) -- (3,5);
\draw(4,4) -- (4,5);


\draw(5,4) -- (5,5);
\draw(6,4) -- (6,5);
\draw(7,4) -- (7,5);
\draw (8,4) -- (8,5);


\draw(9,4) -- (9,5);
\draw(10,4) -- (10,5);
\draw(11,4) -- (11,5);
\draw(12, 4) -- (12,5.0);
\draw(13, 4) -- (13,5.0);


\draw[line width=0.35mm] (15,2.3) -- (15,5.7);  

\draw[line width = 0.35mm](14,3) -- (14,5.7);  
\draw[line width=0.35mm,white] (14,3) -- (14,4);  
\draw[line width = 0.35mm](14,2.3) -- (14,5.7);  

\draw[line width=0.35mm](12,2.3) -- (12,5.7);  

\draw[line width=0.35mm](8,2.3) -- (8,5.7);  

\draw[line width=0.35mm](0,2.3) -- (0,5.7);  

\node at(-0.8, 5.4) {\it rank:};

\node at(-0.8, 2.55) {\it index:};

\node at(-0.8, 4.5) {\scriptsize WHITE};
\node at(-0.8, 3.5) { \scriptsize BLACK};
\node at (4, 5.5) {\large \bf 3};
\node at (10, 5.5) {\large \bf 2};
\node at (13, 5.5) {\large \bf 1};
\node at (14.5, 5.5) {\large \bf 0};

\draw(15,4) -- (16,5);
\draw(15,5) -- (16,4);
\draw(15,3) -- (16,4);
\draw(15,4) -- (16,3);

\node at (0.5,2.5) {\large $15$};
\node at (1.5,2.5) {\large $14$};
\node at (2.5,2.5) {\large $13$};
\node at (3.5,2.5) {\large $12$};

\node at (4.5,2.5) {\large $11$};
\node at (5.5,2.5) {\large $10$};
\node at (6.5,2.5) {\large $9$};
\node at (7.5,2.5) {\large $8$};

\node at (8.5,2.5) {\large $7$};
\node at (9.5,2.5) {\large $6$};
\node at (10.5,2.5) {\large $5$};
\node at (11.5,2.5) {\large $4$};

\node at (12.5,2.5) {\large $3$};
\node at (13.5,2.5) {\large $2$};
\node at (14.5,2.5) {\large $1$};
\node at (15.5,2.5) {\large $0$};

\draw(9,3) -- (9,4);
\draw(10,3) -- (10,4);
\draw(11,3) -- (11,4);
\draw(12, 3) -- (12,4);
\draw(13, 3) -- (13,4);




\node at(8, 1.3) {\scriptsize  A White-Black Array with R=4, N=16,};

\end{tikzpicture}


}
\end{center}
\vspace{-0.2in}
\caption{The layout of a white-black array with four segments}
\label{fig:layout}
\end{figure}

\vspace{-3mm}
\section{Operations}
\label{sec:operations}

This section provides the procedures for the Insert, Search, and Delete operations of the BWA.
\vspace{-1mm}
\subsection{Insert}
\label{sec:insert}


The Insert of a value  (Listing \ref{listing_insert}) is performed by first checking if the  rank 0 segment is active. If not, 
the  value is simply put at W[1] (Line 3); otherwise, it is put at B[1], followed by a merge of the black and white segments of rank 0 (Line 5, 6).
Note that  the \enquote{!} in Line 2  denotes the logical NOT operator.

\nopagebreak
\begin{lstlisting}[caption={The Insert operation  of BWA. },label=listing_insert, captionpos=b]
Insert (v)
    if (!active(0))
        W[1] = v
    else
        B[1] = v
        merge (0)
    end
end
\end{lstlisting}   
\pagebreak[2]

The following code is for the merge of black and white segments of  rank $i$. Again, the result of the merge is put in the white segment of next
higher rank, if it is not active; otherwise, the result is put in the black segment of higher rank, followed by a recursive call on the merge of two segments of that higher rank:
\begin{samepage}
\begin{lstlisting}[caption={The merge of  black and white segments of rank  i.},label=prog_merge, captionpos=b,abovecaptionskip=0.7em]
merge (i)
    s = S(i)
    t =  T(i)
    if (!active (i+1))
        merge (s, t, W))
    else
        merge (s, t, B)
        merge (i+1) 
    end
    total = total +1
end
\end{lstlisting}   
\end{samepage}

	Note that the operator {\tt merge} is overloaded:  {\tt merge (i) } refers to the merge of  black and white segments of rank $i$, whereas {\tt merge (s, t, arr) } refers the merge of sections of black and white arrays with indices in the range of [s, t]. The argument {\tt arr}  indicate targeted destination of the merged result (black or white).
	
       Figure \ref{fig:insert} is an illustration of a recursive merge process. Note that, with this example,  when the new value \enquote{52} is being inserted (Figure \ref{fig:insert}(a)), the total number of values in the BWA is 7 = [0111], which indicates the lowest ranked three white segments are active. It follows that the insert will recurse three times (Figures \ref{fig:insert}(b), (c)), and (d) before it settles. At that time, we increase the state variable total by one to become 8 = [1000]. It follows that the next insertion will be done by one line of code (Line 3 of Listing \ref{listing_insert}).
       
       The merge of two segments will always leave the old values that were merged in their original places, the corresponding segment then becomes {\it inactive}. It should be noted that there is no need to clear out the values in the inactive segments, for they will always be indicated by the \enquote{0} valued bit in the variable {\tt total} after the recursion is completed, and will not be referred to until the segments are filled with new values and become active again. In Figure \ref{fig:insert}, we have used horizontal line to cross out those values that are in the inactive segments.

\begin{figure}[h!]
\begin{center}
\resizebox{0.95\textwidth}{!} {      
\input{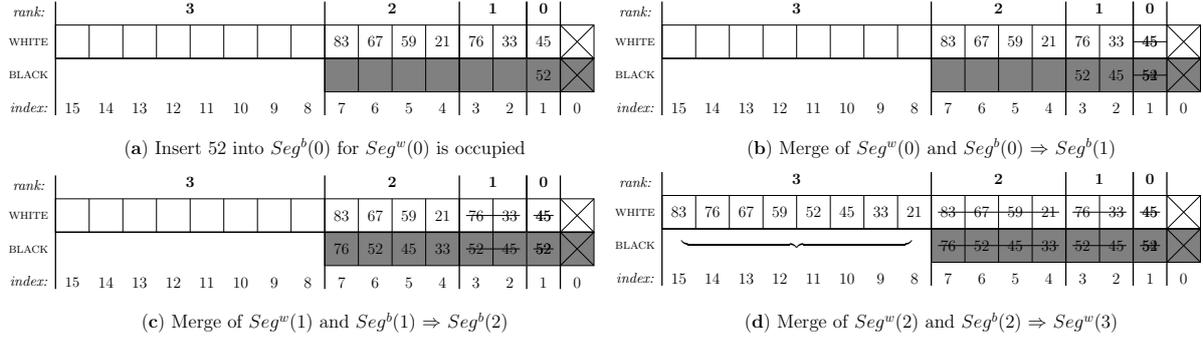}
}
\end{center}
\caption{The insertion of an element and the invoked recursion process}
\label{fig:insert}
\end{figure}

\begin{figure}[h]
\begin{lstlisting}[caption={The merging procedure of segments with indices between s and t.},label=prog:insert, captionpos=b,abovecaptionskip=0.7em,mathescape=true]
 merge (s, t, A)	
      i = s   
      j = s 
      while ((i $\leq$ t) && ( j $\leq$ t))
             if B[i] $\leq$ W[j] 
                     A[k] = B[i]
                     i = i+1
             else
                      A[k] = W[j]
                      j = j+1
             end
             k = k+1
       end
       if (i >t)
            for (u = j; u <=t; u++)
                  A[k] = W[u]
                  k = k+1
            end
       else
            for (u = i; u <=t; u++)
                  A[k] = B[u]
                  k = k+1
            end
       end
end
\end{lstlisting}   
\end{figure}


In the following, the {\em weight} of an integer refers to  the number of non-zero bits  in its binary form. For example,  the weights of  number 7 and 32 are respectively 3 and 1.  Observe that the programs in  Listing \ref{listing_insert}, \ref{prog_merge}, and \ref{prog:insert}  lead to  the following properties of the dynamics of the Insert operation:

\begin{property}[Insert]
Let the number of values stored in a BWA be {\tt total}  $ = (b_{m-1}, \ldots, b_i, \ldots, b_0) $ with weight $w$, then there are exactly $w$  active segments in the BWA; (b) a segment of rank $i$ is active if and only if $b_i$ =1; (c) values in all active segments are sorted in ascending order; (d) when ${\tt total} = 2^k$ is a $k$th power of two,  there is only one active segment with rank of $k$, and all the values in the BWA are stored in that segment.
\end{property}

A number $n$ is said to be {\it k-trailed} if it has the binary form of  $ n = [b_{m-1}, \ldots, b_{k+1}. 0, \underbrace{1, \ldots,1}_{k}]$. For instance, 
the number 7, 11, and 9 are respectively 3-, 2-, and 1-trailed. A number is {\it strongly k-trailed} if it is k-trailed, and $b_i = 0$, for all $i > k$. The number 15, e.g. is strongly 4-trailed\footnote {It is easy to show that any integer $n$ of the form $2^k-1$ is strongly $k$-trailed.}. Note that any even number is, by definition, 0-trailed.

\begin{theorem}[Recursion of Merge]
\label{theorem:insert}
Let $n$ be the number of values stored in a BWA, and it  is $k$-trailed, then an insertion of a value into the BWA with $n$ values leads to exactly  $k$ recursive merges.
\end{theorem}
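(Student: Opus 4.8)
The plan is to trace the chain of recursive \texttt{merge} invocations triggered by a single \texttt{Insert} and to show that its length is dictated entirely by the trailing-ones pattern of $n = {\tt total}$. The backbone of the argument is the fact recorded in Property~1(b) that a segment of rank $i$ is active exactly when the $i$th bit of {\tt total} equals $1$, combined with the structure of Listing~\ref{prog_merge}: inside \texttt{merge(i)} the recursive call \texttt{merge(i+1)} is executed precisely when \texttt{active(i+1)} holds. So the whole problem reduces to counting how far this guarded chain descends before a guard fails.

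First I would establish a \emph{descent invariant}: throughout the downward phase of the recursion the value of {\tt total} is unchanged and equal to the original $n$. This holds because in Listing~\ref{prog_merge} the statement \texttt{total = total + 1} is placed \emph{after} the recursive call \texttt{merge(i+1)}, so every increment is performed only as the recursion unwinds and none occurs before a deeper call is made. Consequently each guard \texttt{active(i+1)} met on the way down is evaluated against the bits of the same integer $n$. I would also note that the three-argument \texttt{merge(s,t,A)} of Listing~\ref{prog:insert} performs no rank-based recursion, so the only recursive merges are the rank calls \texttt{merge(0)}, \texttt{merge(1)}, \ldots, which form a single chain.

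Next I would count that chain. Let $g(i)$ denote the number of \texttt{merge} calls produced by \texttt{merge(i)} when it is entered with ${\tt total}=n$. By the descent invariant and Property~1(b), \texttt{merge(i)} recurses iff bit $i+1$ of $n$ is $1$, giving the recurrence $g(i)=1$ when bit $i+1$ is $0$ and $g(i)=1+g(i+1)$ when bit $i+1$ is $1$. Since $n$ is $k$-trailed, bits $0,1,\ldots,k-1$ are all $1$ and bit $k$ is $0$. For $k=0$ the rank-$0$ segment is inactive, so \texttt{Insert} places the value at $W[1]$ and no merge is ever called, matching $k=0$. For $k\ge 1$ the rank-$0$ segment is active, so \texttt{Insert} calls \texttt{merge(0)}; unrolling the recurrence along the run of ones gives $g(k-1)=1$ (because bit $k$ is $0$) and $g(i)=1+g(i+1)$ for $0\le i\le k-2$, hence $g(0)=k$. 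The invoked merges are exactly \texttt{merge(0)} through \texttt{merge(k-1)}, that is, $k$ calls.

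The step I expect to require the most care is the descent invariant, together with the exact accounting of what counts as a recursive merge. Because {\tt total} is incremented during unwinding rather than at the start of each call, one must verify that none of the \texttt{active} guards on the descent ever sees an updated value; a careless reading that incremented {\tt total} before recursing would shift the relevant bits and destroy the clean correspondence with the $k$ trailing ones. Pinning down this timing, and confirming that the very first call \texttt{merge(0)} is counted precisely when $k\ge 1$, is what forces the final tally to be $k$ rather than $k\pm 1$.
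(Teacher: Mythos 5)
Your proof is correct, and it follows the same route the paper intends: the paper actually states this theorem with no explicit proof (it is offered as an observation that the code in Listings \ref{listing_insert}, \ref{prog_merge}, and \ref{prog:insert} "leads to", together with Property 1's correspondence between active segments and the bits of \texttt{total}), and your argument is exactly the formalization of that reasoning — the guarded chain \texttt{merge(0)}, \texttt{merge(1)}, \ldots\ descends while bit $i+1$ of $n$ is $1$ and stops at the first $0$ bit, which for a $k$-trailed $n$ gives precisely the calls \texttt{merge(0)} through \texttt{merge($k-1$)}. Your descent invariant is a worthwhile extra precaution, since the listing places \texttt{total = total + 1} inside every recursion level (which, read literally, conflicts with the text's statement that \texttt{total} increases by one per insertion), yet as you note the increments occur only on unwinding, so every guard on the way down sees the original $n$ and the count of $k$ is unaffected.
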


\begin{corollary}[Destination of Merge]
\label{corollary:insert}
Suppose there are $n$ values in the BWA, and $n$ is $k$-trailed. When a new values is inserted, then
(a) the result  of the $k$ recursive merges will be held in the white segment of rank $k$, which include all the values in segments with ranks smaller than $k$; (b) All the segments with rank smaller than $k$ will become inactive once the insertion is completed; (c) if $n$ is strongly k-trailed, after the insertion, all the values in the BWA will be held in the segment of rank $k$. 

\end{corollary}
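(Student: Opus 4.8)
\emph{Proof plan.} The guiding idea is that inserting one value behaves exactly like incrementing the counter {\tt total} from $n$ to $n+1$ in binary, with the carry chain realized by the recursive merges; I would organize everything around the bit pattern of $n$ and the destinations chosen by the {\tt active} tests. First I would record \emph{when} those tests fire: in {\tt merge}, the test {\tt active}$(i+1)$ that selects the destination is performed on the way \emph{down} the recursion, before the recursive call. Hence every destination decision made during the single insertion is taken against the unchanged value $n$, so it is governed by the bits $b_1,\dots,b_k$ of $n$ alone, while upon completion {\tt total} records $n+1$. By Theorem \ref{theorem:insert} the insertion triggers exactly $k$ recursive merges, and tracing the recursion identifies them as the calls {\tt merge}$(0),{\tt merge}(1),\dots,{\tt merge}(k-1)$.

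For part (a) I would use the hypothesis that $n$ is $k$-trailed, i.e. $b_0=\dots=b_{k-1}=1$ and $b_k=0$. (When $k=0$ the value is placed directly at the white rank-$0$ cell and there is nothing to prove, so assume $k\ge 1$.) Then for each $i$ with $0\le i\le k-2$ the test {\tt active}$(i+1)$ is \emph{true}, so {\tt merge}$(i)$ writes its output into the \emph{black} segment of rank $i+1$ and recurses; at $i=k-1$ the test {\tt active}$(k)$ is \emph{false}, so {\tt merge}$(k-1)$ writes its output into the \emph{white} segment of rank $k$ and the recursion halts. The content claim then follows by a short induction on $i$: the new value is first placed in the black rank-$0$ cell, and I would show that after {\tt merge}$(i)$ the black segment of rank $i+1$ holds the sorted union of the new value with all old values of ranks $0,\dots,i$; since this union has $1+\sum_{j=0}^{i}2^{j}=2^{i+1}$ elements, it exactly fills that segment. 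Carrying the induction one step further, the terminal {\tt merge}$(k-1)$ merges the old white segment of rank $k-1$ with the just-formed black segment of rank $k-1$ and deposits $1+\sum_{j=0}^{k-1}2^{j}=2^{k}$ values into the white segment of rank $k$, exactly filling it; these are precisely the new value together with every old value of rank $<k$. (That the result is sorted is inherited from each {\tt merge}$(s,t,A)$ being an ordinary merge of two ascending runs, as in Property (Insert)(c).) This is statement (a).

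Statements (b) and (c) I would read off from the binary increment. Adding $1$ to a $k$-trailed $n$ turns its trailing block $0\,\underbrace{1\cdots1}_{k}$ into $1\,\underbrace{0\cdots0}_{k}$ and leaves all higher bits unchanged, so in $n+1$ every bit of rank $<k$ is $0$. By Property (Insert)(b), a segment is active iff its bit is $1$; hence all segments of rank $<k$ are inactive after the insertion, giving (b). For (c), strong $k$-trailedness means the higher bits were already $0$, so $n=2^{k}-1$ and $n+1=2^{k}$ carries a single $1$ at position $k$; thus rank $k$ is the only active segment, and by (a) it already holds all $2^{k}$ values, which is the entire BWA.

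I expect the one genuinely delicate step to be the value-conservation induction inside (a): verifying that each intermediate merge \emph{exactly} fills the next black segment (no overflow, no gap) and that the accumulated multiset of values is carried forward intact, so that the final white rank-$k$ segment receives precisely the new value plus all lower-rank values and nothing else. The identities $1+\sum_{j=0}^{i}2^{j}=2^{i+1}$ make the "exactly fills" part clean, but I would still need to check the read/write index ranges of {\tt merge}$(s,t,A)$ carefully --- it reads the rank-$i$ black and white segments and writes the rank-$(i+1)$ destination --- to ensure the inactive lower segments are never overwritten before they are consumed. Once the binary-increment correspondence and Property (Insert)(b) are in hand, (b) and (c) are immediate.
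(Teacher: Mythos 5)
Your proof is correct and takes essentially the same route as the paper, which in fact states this corollary with no explicit proof at all --- it is presented as an observation following from the insert/merge listings and Theorem \ref{theorem:insert}, and the surrounding text (``the changing process in its binary form is isomorphic to the merging dynamics of the segments'') appeals to precisely the binary-increment/carry-chain correspondence you formalize. Your additions --- the observation that all destination tests are evaluated against the unchanged value of {\tt total} on the way down the recursion, the value-conservation induction, and the exact-fill counting $1+\sum_{j=0}^{i}2^{j}=2^{i+1}$ --- simply make rigorous what the paper leaves implicit, so there is nothing to correct.
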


The variable {\tt total}, which records the number of values stored in the BWA, therefore, encoded all the information about the BWA configuration. 
When it  increases with an insertion, the changing process in its binary form  is isomorphic to the  merging dynamics of the segments.  This is an elegant as well as useful property,  which is exploited throughout the coding  of the 
BWA operations.

\subsection{Search}

The following is the program for the Search operation:

\begin{lstlisting}[caption={The Search for a value $v$ in a BWA with size $N = 2^k$},label=prog:search, captionpos=b,abovecaptionskip=0.7em,mathescape=true]
Search(v)  		
      r = Nil
      for (i=k-1; i $\geq$ 0; i--)
            if (active(i))
                r = segSearch(v, i)
                if (r != Nil)
                      break
                end
            end
       end
       return r
end
\end{lstlisting}   

Thus, the  search for a value in a BWA is reduced to the searches over active segments, starting from the active segment with highest rank\footnote{A search starting from the lower ranked to higher ranked active segments will not affect the correctness of the search operations, however, it is not as
efficient as the other way around when a value may appear multiple times in the BWA.}. If the the value is found, the index of the entry with the value is returned. It returns Nil if and only if the value is not found in any of the active segments.

The search of a segment is performed by segSeach, 

\begin{lstlisting}[caption={The search of a value within a given segment of rank $i$},label=prog:segSearch, captionpos=b,abovecaptionskip=0.7em,mathescape=true]
segSearch(v, i)  		
      s = S(i)
      t  = T(i)
      r = binarySearch(v, s, t)
      return r
end
\end{lstlisting}   

The above listing calls a  binarySearch  (Line 4) over a section of an array with indices ranging from  $s$ to $t$ inclusive. The algorithm is  just  a standard and simple binary search. For the reason of completeness, a version of the code is given in Listing \ref{prog_binarySearch}.

\begin{figure}[h]
\begin{lstlisting}[caption={Binary search of a value $v$ between indices $s$ and $t$},label=prog_binarySearch, captionpos=b,abovecaptionskip=0.7em,mathescape=true]
binarySearch(v, s, t)  
      r = Nil		
      if ((t - s) $\leq$ 1)
          if (W[s] == v) return s
          if (W[t]  == v) return t
          return r
      else
          m = (s+t) $\gg$ 1
          if (W[m] $\leq$ v)
               r =   binarySearch(m, t, v)
          else 
               r = binarySearch(s, m, v)
          end
      end
      return r
end
\end{lstlisting}  
\end{figure}
Note that the middle point between $s$ and $t$ is calculated at Line 8  with a binary right shift.

\subsection{Delete}

The deletion of value from BWA is straightforward by itself once a search for it is performed.  When the value is found, a special value VOID is used
to replace it. However, to prevent the deterioration of performance when the value of VOID becomes dominating in some segments, some care must
be taken. 

For this reason,  we introduce an array V, referred to as  the {\it occupancy vector} of length $k$ for a BWA of size $N = 2^k$, where $V[i]$ records the number of real (non-VOID) values of segment of rank $i$. A procedure named {\it demote} is introduced, which will be invoked whenever  the number of real values reached a threshold, which is half of the length of a given segment. 

The code for Delete is given below:

\begin{figure}[h]
\begin{lstlisting}[caption={Delete of a value $v$ from a BWA},label=prog:delete, captionpos=b,mathescape=true]
Delete(v)
      r = Search(v)
      if (r == Nil) return Nil 
      W[r] = VOID
      i = seg(r)
      V[i] = V[i] -1
      if (V[i] $\leq$ (size(i) $\gg$ 1))
             demote(j)
             if (active(j-1))
                  merge (j-1)
                  total = total - size (j-1)
                  V[j] = V[j] + V[j-1]
             else
                  total = total - size (j)
                  V(j-1) = V[j]        
             end
       end		
end	
\end{lstlisting}  
\end{figure}

As can be seen in Listing \ref{prog:delete},  {\tt delete} always starts from a search for the value (Line 2). If the value is not found, it simply returns Nil to indicate the failure of the operation (Line 3); otherwise, the index of the value is returned, and the deleted value is replaced by a VOID (Line 4). With a successful delete, the value  V[i]  is decreased by one (Line 6) to reflect the loss of a
real value in the segment in which the deletion occurred. The occupancy rate of the segment is then checked (Line 7). If
it does not reach 50\%, the deletion is completed; otherwise, a demotion of the segment to the next lower rank will take place (Line 8). Depending on whether the lower ranked segment is active, the demote will copy all the non-VOID values to either the black or the white segment of lower rank. In and only in the former case, a merge of the black and white segments will take place at the lower rank (Line 10). Lines 11, 12, 14, 15 are the code to update the state variable {\it total} and  occupancy vector $V$ of the BWA. 

Note that the call to {\tt seg} at Line 5 of the above listing returns the segment its argument index belongs to. It can be easily calculated by an examination of
the binary of the argument  index, the position of most  significant bit\footnote{Counting from right to left, starting from zero.} that equals to one is the rank the segment that the  index belongs, e.g. seg(7) = 2 and seg(11) = 3.

The code for {\tt demote} that is referred to at Line 8 in Listing \ref{prog:delete} is given below:

\begin{minipage}{\linewidth} 
\begin{lstlisting}[caption={The demote  operation of a half emptied segment of rank $i$},label=prog:demote, captionpos=b,abovecaptionskip=0.7em,mathescape=true]
demote(i)
	s = S(i)
	t  = T(i)
	arr = (active(i-1)? B:W)
	j = S(i-1)
	for (j= s; j <= t; j++)
	     v = W[j]
	     if (v != VOID)
	           arr[j] = v
	     end
	     j = j+1
	end			
end
\end{lstlisting}  
\end{minipage}

Figure \ref{fig:demote} illustrates a delete of the value 59  to  take place   in  segment  of rank 3 (a);  the number of non-VOID values in
the segment after the deletion decreased to  half of its length, a demotion is then to be invoked (b); since segment of rank 2 is active, the demotion result is 
put in black segment (c); a merge at rank 2 takes place, all the values from the white and black segments of rank 2 are now in segment of rank 3 (d).
Note that the value of VOID is denoted by $\phi$ in the illustration. 

\begin{figure}[htb]
\begin{center}
\resizebox{0.95\textwidth}{!} {      
\input{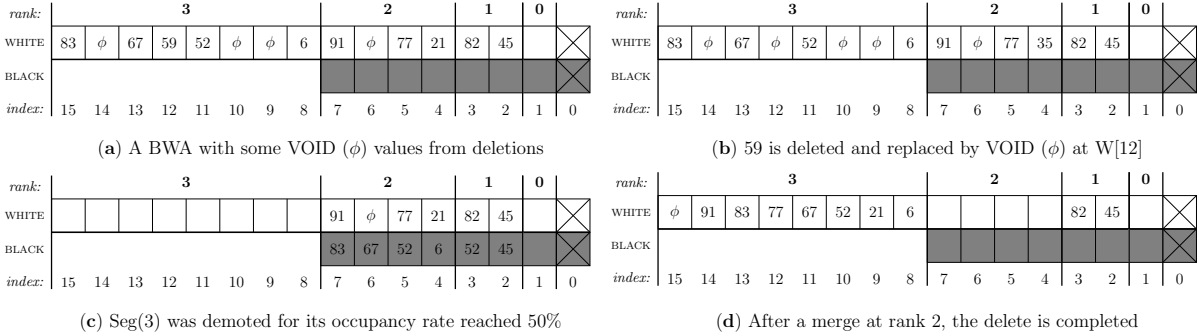}
}
\end{center}
\caption{The demotion process invoked  by  a delete }
\label{fig:demote}
\end{figure}

The code in  Listing \ref{prog:delete} and \ref{prog:demote} leads to  the following properties:

\begin{property}[Delete]
\label{theorem:delete}
(a) The delete of a value leads to a demotion if and only if the number of non-VOID values decreased to  half of the segment size due to the deletion; (b) when a segment of rank $i$ is demoted, all its non-VOID values will be moved to segment of rank $(i-1)$; (c) when the segment of rank $(i-1)$ is inactive, the result of demotion will be put in the white segment of rank $(i-1)$; (d) otherwise, when the
segment of rank (i-1) is active, the demoted segment will be held in the black segment of rank $(i-1)$,  followed by a merge of black and white segments of rank $(i-1)$, the result is then put back to the white segment of rank  $i$.
\end{property}

In the following, the {\it occupancy rate} of a segment refers to the ratio between the number of non-VOID values in the segment and the size of the segment.

\begin{theorem}[Occupancy Rate]
A demote operation results in either a new  active segment of rank lower by one with an occupancy of 100\% if the lower ranked segment was inactive; 
or a new segment of the same rank with an occupancy rate strictly greater than 75\% after the merge with the segment of rank lower by one.
\end{theorem}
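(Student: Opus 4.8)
The plan is to reduce the statement to the two branches of {\tt demote} (Listing~\ref{prog:demote}) after isolating one structural invariant. Throughout I use that a rank-$i$ segment has size $2^i$ (Section~\ref{sec:layout}) and that, by Property~\ref{theorem:delete}(a), a demotion of a rank-$i$ segment is triggered exactly when a deletion has reduced its number of non-VOID (real) values to $2^{i-1}$, one half of its size. Hence at the instant {\tt demote} runs, the segment being demoted carries precisely $2^{i-1}$ real values, all of which pass to rank $i-1$ by Property~\ref{theorem:delete}(b).

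The inactive branch is immediate. If the rank-$(i-1)$ segment is inactive then, by Property~\ref{theorem:delete}(c), all $2^{i-1}$ real values are copied into the white segment of rank $i-1$, whose size is exactly $2^{i-1}$; the segment is thereby filled completely, giving occupancy $100\%$ and the first alternative of the statement.

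The active branch hinges on what I regard as the crux: the invariant that \emph{in every stable state each active segment holds strictly more than half of its size in real values}, i.e.\ $V[i] > 2^{i-1}$ whenever rank $i$ is active. I would prove this by induction on the operation sequence, the empty BWA satisfying it vacuously. The decisive observation is that no active segment can linger at exactly $50\%$: the very deletion that would push $V[i]$ down to $2^{i-1}$ immediately fires the demotion that deactivates it (Listing~\ref{prog:delete}), so any segment still active satisfies $V[i] > 2^{i-1}$. It then remains to check that each mechanism which \emph{creates} an active segment respects the bound. There are three. A demotion into an inactive rank (the branch just treated) yields $100\%$. A demotion into an active rank is the case under study, shown below to exceed $75\%$; the induction stays clean because that argument invokes the invariant only for the rank-$(i-1)$ segment as it stood before the current delete. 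Finally, an insertion settles, by Corollary~\ref{corollary:insert}(a), into a rank-$k$ white segment whose real values number $1+\sum_{j<k}V[j]$; since each merged white segment is active and so, by hypothesis, satisfies $V[j]>2^{j-1}$, this total exceeds $1+\tfrac12(2^{k}-1)>2^{k-1}$, keeping occupancy above $50\%$.

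With the invariant secured, the active branch is pure counting. The demoted $2^{i-1}$ real values are written to the black segment of rank $i-1$, while the active white segment of rank $i-1$ holds $V[i-1]>2^{i-2}$ real values. The ensuing {\tt merge} of the two rank-$(i-1)$ segments (Property~\ref{theorem:delete}(d)) deposits all $2^{i-1}+V[i-1]$ real values into the white segment of rank $i$, of size $2^{i}$, so its occupancy is $\bigl(2^{i-1}+V[i-1]\bigr)/2^{i}>\bigl(2^{i-1}+2^{i-2}\bigr)/2^{i}=\tfrac34$, the second alternative. I expect the main difficulty to be neither counting step but the precise formulation of the invariant and the verification that every activation mechanism, the insertion merge in particular, preserves it; once that ledger balances, both occupancy bounds drop out at once.
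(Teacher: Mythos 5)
Your proposal is correct, and it is in fact more complete than what the paper itself offers: the paper states the Occupancy Rate theorem with no proof at all, relying implicitly on exactly the two-case counting you perform (the demoted rank-$i$ segment carries exactly $2^{i-1}$ real values, so an inactive target of size $2^{i-1}$ is filled to 100\%, while an active target contributes $V[i-1]>2^{i-2}$ further real values to the size-$2^i$ merged result, giving strictly more than $3/4$). Where you genuinely depart from the paper is in recognizing that this counting, as the paper arranges it, is circular: the active branch needs the strict 50\% lower bound on $V[i-1]$, yet the paper only derives that bound (the Corollary ``Occupancy Lower Bound'') as a consequence of this very theorem. Your fix --- a simultaneous induction over the operation sequence maintaining the invariant $V[j]>2^{j-1}$ for every active segment, verified separately for each activation mechanism (the insert merges via the Destination of Merge corollary, demotion into an inactive rank, demotion into an active rank) --- breaks that circle cleanly, and your check of the insert case, $1+\sum_{j<k}V[j]>1+\tfrac12(2^k-1)>2^{k-1}$, supplies a step the paper nowhere addresses. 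What the paper's implicit route buys is brevity; what yours buys is soundness: the theorem and its corollary emerge together as the invariant of one induction rather than each resting on the other, and your observation that a segment can never linger at exactly 50\% occupancy (the triggering deletion immediately deactivates it) is precisely the point needed to make Property (Delete)(a) usable as the ``exactly $2^{i-1}$'' premise of both branches.
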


\begin{corollary}[Occupancy Lower Bound]
\label{property:delete}
At any stable state of a BWA, the occupancy rate of any active segment is always strictly greater than 50\%.
\end{corollary}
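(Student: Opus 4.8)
The plan is to prove the statement by induction on the sequence of operations (Insert, Search, Delete) applied to an initially empty BWA, maintaining the invariant that \emph{at every stable state every active segment has occupancy rate strictly greater than $1/2$}. The base case is the empty array, where no segment is active and the claim holds vacuously. A Search never alters the contents, so it preserves the invariant trivially, and only Insert and Delete require attention. The Delete case is handled almost entirely by the Theorem (Occupancy Rate): a demotion yields either a fresh active segment of rank $i-1$ at 100\% occupancy, or, after the merge with rank $i-1$, a segment of rank $i$ whose occupancy exceeds 75\%; both exceed $1/2$. It then remains to check the two places that theorem does not directly cover, namely a Delete that does \emph{not} trigger a demotion, and every Insert.

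For a Delete that does not demote, I would appeal to the threshold guard in Listing \ref{prog:delete}: a demotion is invoked exactly when $V[i] \le 2^{i-1}$ (the code's $\mathrm{size}(i)\gg 1$), i.e. precisely when the occupancy has fallen to $1/2$. Hence if no demotion occurs the surviving count satisfies $V[i] > 2^{i-1}$, so the segment remains strictly above $1/2$, while all other segments are untouched. Combined with the observation that a segment's occupancy can only decrease through a Delete, this shows that no active segment can silently slip to or below 50\% between demotions.

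The Insert case is the crux, since merging can carry VOID entries upward and is not addressed by the earlier results. Suppose the current count $n$ is $k$-trailed. If $k=0$ the new value lands in a fresh rank-$0$ segment at 100\% occupancy. If $k\ge 1$, then by definition of $k$-trailed the bits $0,\dots,k-1$ of $n$ are all one, so by Property (Insert) the segments of ranks $0,\dots,k-1$ are all active; by Corollary \ref{corollary:insert} they are merged, together with the freshly inserted value, into the white segment of rank $k$, and all lower segments go inactive. By the inductive hypothesis the number $r_i$ of real values in the active segment of rank $i<k$ satisfies $r_i > 2^{i-1}$, hence, since the counts are integers, $\sum_{i=0}^{k-1} r_i \ge 2^{k-1}$. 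A merge neither creates nor destroys real values, so the rank-$k$ segment holds exactly $\sum_{i=0}^{k-1} r_i + 1 \ge 2^{k-1}+1$ real values in size $2^k$, giving occupancy $\ge (2^{k-1}+1)/2^k > 1/2$. Every active segment of rank greater than $k$ is unchanged, so the invariant is restored.

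I expect the main obstacle to be precisely this Insert bound, for two reasons. First, the inequality is tight: summing the strict bounds $r_i > 2^{i-1}$ over $i=0,\dots,k-1$ yields only $\sum_i r_i \ge 2^{k-1}$, which alone gives occupancy $\ge 1/2$ but not strictly more, so the strictness genuinely rests on the single freshly inserted element, and the argument must track that $+1$ carefully. Second, one must verify that the merge does not dilute the count with VOIDs in a way that defeats the bound; the key fact to isolate is that a merge conserves the real-value count, so the rank-$k$ total is exactly the sum of its constituents' real counts plus one. The rank-$0$ edge cases (a size-$1$ segment is either empty or 100\% full) are minor and can be dispatched directly.
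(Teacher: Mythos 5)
Your proof is correct, and it is in fact more complete than the paper's own treatment. The paper gives no explicit proof: the corollary is presented as an immediate consequence of the Occupancy Rate theorem, so the implicit argument is only that demotions are triggered exactly when occupancy falls to $50\%$ and restore it to $100\%$ or to above $75\%$, whence between demotions every active segment stays strictly above $1/2$. That argument covers only the Delete dynamics. Your induction over the operation sequence additionally proves the Insert case --- that inserting into a $k$-trailed state merges the active segments of ranks $0,\dots,k-1$, VOIDs included, into the white segment of rank $k$, and that the conserved real-value count $\sum_{i<k} r_i + 1 \ge 2^{k-1}+1$ keeps occupancy strictly above $1/2$. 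This case is genuinely needed for the corollary as stated (``at any stable state''), because upward merges carry VOIDs into a segment of doubled size, and, as you correctly isolate, the strictness of the bound there hangs on the single freshly inserted real value. One refinement your framework supports and should make explicit: the $>75\%$ branch of the Occupancy Rate theorem itself tacitly assumes that the rank-$(i-1)$ segment being merged with has occupancy above $50\%$ --- i.e., it presupposes the very invariant being proven. Cited as an independent black box this would be circular; inside your induction it is harmless, since at the moment of the demote the inductive hypothesis supplies $V[i-1] > 2^{i-2}$ for the lower segment. With that caveat noted, your argument is sound and supplies the rigor the paper leaves implicit.
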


It is this  lower bound of the occupancy rate that guarantees the BWA performance will not deteriorate with arbitrary sequences of operations, including
deletes.

\section{Performance}
\label{sec:performance}

The dynamics and the complexities of the BWA operations are discussed Section \ref{subsec:analysis}. In Section \ref{subsec:benchmark}, a C++ implementation of BWA is described, the performance data of the BWA in C++ are presented and plotted, serving in part as  a validation to the theoretical analysis.

\subsection{Analysis}
\label{subsec:analysis}

When a sequence of values are inserted into a BWA, the merging process can be illustrated with a {\it BWA merge tree}. 
For a BWA of size $N = 2^{k+1}$, the merge  tree has $k$ levels, corresponding to the different ranks of the segments. The leaf nodes correspond to the input values from left to right in the order they are inserted. An internal node represents a merge between
the two segments denoted by its children. A white colored node represents a white segment, and a black colored node 
represents a black segment. Figure \ref{fig:bwa_tree} illustrates such a merge tree with four levels.

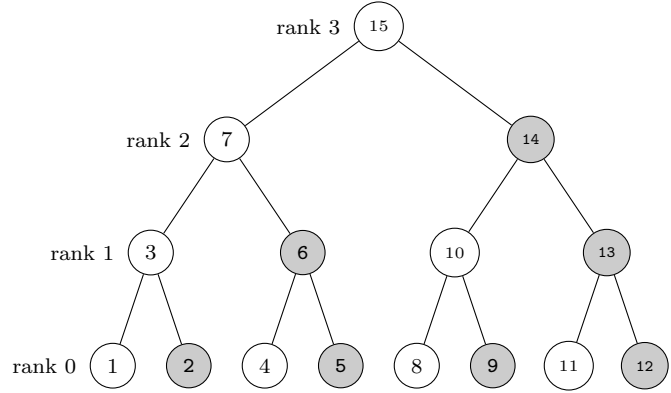
\begin{figure}[hbt]

\begin{center}

\tikzset{
  solid node/.style={circle,draw,inner sep=1.2,fill=black},
  hollow node/.style={circle,draw,inner sep=1.2},
}

\begin{tikzpicture}[nodes ={draw, circle},  level distance=1.5cm,
  level 1/.style={sibling distance=4cm},
  level 2/.style={sibling distance=2cm},
  level 3/.style={sibling distance = 1cm},
  left/.style={circle,fill=black!20,font=\ttfamily},
  seg/.style={ font=\ttfamily},
  seg2/.style={rectangle, font=\ttfamily},
  ]

  \node[label=left:{\scriptsize rank 3}]{\tiny 15}
      child {node [label=left:{\scriptsize rank 2}]{\scriptsize 7}
                child {node [label=left:{\scriptsize rank 1}]{\scriptsize 3}
                         child {node [label=left:{\scriptsize rank 0}]{\scriptsize 1}}
                         child {node [left] {\scriptsize 2}}
                 }
                child {node [left] {\scriptsize 6}
                         child {node {\scriptsize 4}}
                         child {node [left] {\scriptsize 5}}
                }
       }
      child {node[left] {\tiny 14}
             child {node {\tiny 10}
                       child {node {\scriptsize 8}}
                       child {node[left] {\scriptsize  9}}
              }
             child {node[left] {\tiny 13}
                       child {node {\tiny 11}}
                       child {node [left] {\tiny 12}}
              }
     }; 
\end{tikzpicture}
\end{center}
\caption{The BWA merge tree and its post order traversal}
\label{fig:bwa_tree}
\end{figure}

The insert sequence in BWA leads to the following simple dynamics of the tree: (a) the input values are put to the leaf nodes, one at a time; (b) for any internal node, a merge of its two children of different colors is invoked whenever  both of its children are present\footnote{This can also be understood  from the view of a data flow paradigm: an operation is invoked whenever its arguments are ready.}. 

A close examination of the BWA merge, then,  leads to

\begin{lemma}[BWA Merge Tree]
\label{lemma:post_order}
The order of the merges represented by the nodes in a BWA merge tree is a post order traversal of the tree.

\end{lemma}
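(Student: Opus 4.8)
The plan is to prove the statement by structural induction on the subtrees of the merge tree, tracking for each node its \emph{firing time}: the step at which its segment first becomes present (for a leaf, the step at which its value is inserted; for an internal node, the step at which its merge is invoked). The goal is to show that listing the nodes in increasing order of firing time reproduces the post-order traversal; the merges are then exactly the internal-node subsequence of this order.

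First I would record two structural facts about the merge tree from its definition. The firing rule is \emph{local}: by dynamics rule (b), an internal node's merge is invoked precisely when both of its children are present, and it fires immediately once the second child becomes present; hence a node's firing depends only on the leaves lying in its own subtree. The leaf-order fact is that leaves receive the inserted values strictly left to right, so that in any subtree all of the leaves belonging to the left child's subtree are inserted before any leaf of the right child's subtree.

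The induction hypothesis, stated for an arbitrary subtree $T_v$ rooted at $v$, is: if the leaves of $T_v$ are inserted consecutively, with no unrelated insertions in between, then the firing events occurring within $T_v$ happen consecutively in time, their order is the post-order traversal of $T_v$, and $v$ itself is the last of these events. The base case is a single leaf, which is immediate. For the inductive step at an internal node $v$ with left subtree $T_L$ and right subtree $T_R$, I would argue in three phases. While the leaves of $T_L$ are inserted, locality prevents any merge in $T_R$ or at $v$ from firing, so by the hypothesis applied to $T_L$ the events are exactly the post-order of $T_L$, ending when $L$ becomes present. Next, the leaves of $T_R$ are inserted; $v$ still cannot fire because $R$ is not yet present, so by the hypothesis applied to $T_R$ the events are exactly the post-order of $T_R$, ending when $R$ becomes present. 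At that instant both children of $v$ are present, so $v$ fires immediately and last. Concatenating the three phases yields precisely ``post-order of $T_L$, then post-order of $T_R$, then $v$,'' which is the post-order of $T_v$. Applying the hypothesis to the whole tree gives the lemma.

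The step I expect to be the main obstacle is the contiguity / non-interference claim inside the inductive step, namely that the firing events of $T_L$, of $T_R$, and of the remaining nodes do not interleave in time. This is where locality of the firing rule and the left-to-right leaf ordering must be combined carefully: one must verify that no merge outside $T_L$ becomes enabled before exactly the leaves of $T_L$ have been consumed, and symmetrically for $T_R$, so that the induction hypothesis can be applied cleanly to each child's subtree in isolation.
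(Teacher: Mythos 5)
Your proposal is correct, but note what the paper actually does here: its entire proof of this lemma is the single word ``Obvious.'' So there is no argument in the paper to compare against; your structural induction supplies exactly the details the author chose to omit. Your decomposition is the natural one and it is sound: the locality of the firing rule (a merge fires the instant its second child becomes present, and depends only on leaves in its own subtree) combined with left-to-right leaf insertion gives the three-phase structure ``post-order of $T_L$, post-order of $T_R$, then the root,'' which is precisely the recursive definition of post-order. The one point worth making explicit, which you gloss slightly, is the tie-breaking \emph{within} a single insertion: when the last leaf of a subtree is inserted, an entire cascade of merges fires at the same insertion step, and you need these to be ordered bottom-up for the post-order claim to hold. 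This is guaranteed by the code in Listing 2, where {\tt merge}$(i)$ completes the rank-$i$ merge before recursively calling {\tt merge}$(i+1)$; citing that recursion closes the gap you flagged as the main obstacle. In short, your proof is a correct and complete rigorization of a claim the paper merely asserts, and the non-interference argument you were worried about does go through once the intra-insertion ordering is pinned to the recursion in the merge procedure.
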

\begin{proof}
Obvious. 
\end{proof}

In Figure \ref{fig:bwa_tree},
the labels inside the nodes of the BWA merge tree 
have been used to indicate their sequencing number  in a post-order traversal of the tree.

It is interesting to compare the BWA merge tree with that of  mergesort. Figure \ref{fig:merge_tree} is an illustration of
the merge tree of mergesort for eight values. It is obvious that  the merge trees for BWA and mergesort share the same tree structure, the difference lies only in the order the nodes are traversed, which represent the sequence of merge operations that an algorithm performs\footnote{Hence, BWA can be said to perform \enquote{incremental merge sort}  with exactly the same complexity as merge sort, together with some other functionalities that merge sort does not provide.}.

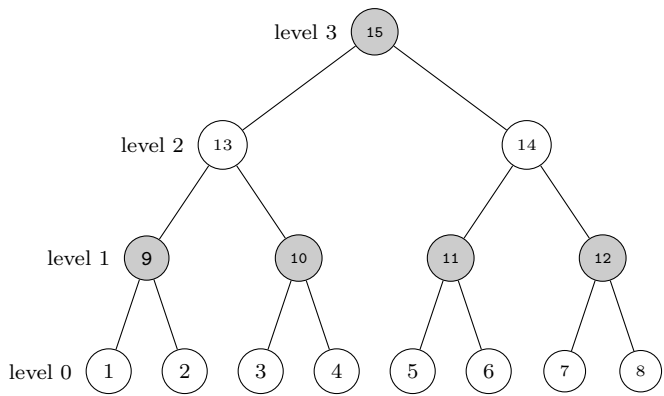
\begin{figure}[hbt]
\begin{center}
\begin{tikzpicture}[nodes ={draw, circle},  level distance=1.5cm,
  level 1/.style={sibling distance=4cm},
  level 2/.style={sibling distance=2cm},
  level 3/.style={sibling distance = 1cm},
  left/.style={circle,fill=black!20,font=\ttfamily},
  seg/.style={ font=\ttfamily},
  seg2/.style={rectangle, font=\ttfamily},
  ]

  \node[left][label=left:{\scriptsize level 3}]{\tiny 15}
      child {node[label=left:{\scriptsize level 2}] {\tiny 13}
                child {node[left][label=left:{\scriptsize level  1}] {\scriptsize 9}
                         child {node[label=left:{\scriptsize level 0}] {\scriptsize 1}}
                         child {node {\scriptsize 2}}
                 }
                child {node[left] {\tiny 10}
                         child {node {\scriptsize 3}}
                         child {node  {\scriptsize 4}}
                }
       }
      child {node {\tiny 14}
             child {node[left] {\tiny 11}
                       child {node {\scriptsize 5}}
                       child {node {\scriptsize 6}}
              }
             child {node[left] {\tiny 12}
                       child {node {\tiny 7}}
                       child {node  {\tiny 8}}
              }
     }; 
\end{tikzpicture}
\end{center}
\vspace{-6mm}
\caption{The operational order in mergesort}
\label{fig:merge_tree}
\end{figure}

The traversal order of the merge tree for mergesort is one level at a time, from the leaf level to the root.
The internal nodes, which represent  merge operations,  at a higher level,   are not to be visited until all the nodes of the lower level have been visited. An illustration of the merge tree
for {\tt mergesort}   is given  in Figure \ref {fig:merge_tree}  with the labelled order by which the nodes are  travesesed.

\begin{theorem}[Insert Time]
The insertion of a sequence of values of length $n=2^k$ into a BWA takes $O(n\log(n))$ time.
\label{theorem:insertTime}
\end{theorem}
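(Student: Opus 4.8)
The plan is to reduce the cost of the entire insertion sequence to a sum over the merge operations performed, and then to evaluate that sum by grouping the merges according to rank. The key structural fact, already established by the BWA Merge Tree lemma, is that inserting $n = 2^k$ values performs exactly the set of merges recorded at the internal nodes of the merge tree; the BWA differs from {\tt mergesort} only in the \emph{order} in which these merges are carried out (post-order versus level-order), not in the merges themselves. First I would observe that the time to merge two sorted runs of sizes $a$ and $b$ is $\Theta(a+b)$, and that for an internal node of the tree the sizes of the two runs being merged are determined solely by the node's rank, independently of when the merge is executed. Consequently the cost charged to each node is fixed by the tree's shape, so the total cost is invariant under the traversal order and equals the total merge cost of {\tt mergesort} on $n$ items.

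Next I would carry out the counting directly. A merge at rank $i$ combines two sorted segments of size $2^{i}$ into a segment of rank $i+1$, and hence costs $\Theta(2^{i})$. Over the course of the $n = 2^{k}$ insertions exactly $n/2^{i+1} = 2^{k-i-1}$ segments of rank $i+1$ are created, so there are precisely $2^{k-i-1}$ merges at rank $i$, for $i = 0,1,\ldots,k-1$. Summing the cost over all ranks gives
\[
\sum_{i=0}^{k-1} 2^{\,k-i-1}\cdot \Theta\!\left(2^{\,i}\right) = \sum_{i=0}^{k-1}\Theta\!\left(2^{\,k-1}\right) = \Theta\!\left(k\,2^{\,k-1}\right) = \Theta(n\log n),
\]
which yields the claimed $O(n\log n)$ bound, and in fact a tight $\Theta(n\log n)$.

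As an independent cross-check I would also account for the cost one insertion at a time using the Recursion of Merge theorem: when the BWA already holds $m$ values and $m$ is $j$-trailed, the insertion triggers merges at ranks $0,\ldots,j-1$, for a total cost $\sum_{i=0}^{j-1}\Theta(2^i)=\Theta(2^{j})$. Inserting the full sequence drives {\tt total} through the values $m = 0,1,\ldots,n-1$, and the number of integers in $[0,n)$ that are $j$-trailed (i.e.\ that end in a $0$ followed by exactly $j$ ones) is $2^{k-j-1}$. Summing $\Theta(2^{j})$ against these counts reproduces the same geometric-per-level sum and again gives $\Theta(n\log n)$, confirming the first computation.

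The only real subtlety, and the step I would treat most carefully, is the justification that the per-node merge cost is genuinely order-independent, so that the {\tt mergesort} count may be reused; this rests on the fact guaranteed by Property~(Insert)(c) that every active segment is already sorted when it enters a merge, so each merge is a single linear pass whose cost depends only on the combined segment size. The remaining book-keeping points—the rank-$0$ merges costing $\Theta(1)$ rather than being strictly proportional to a positive size, and the boundary term at $j=k$ corresponding to the lone strongly $k$-trailed value $n-1 = 2^{k}-1$—do not affect the dominant $\Theta(k\,2^{k})$ term and are absorbed into the constants.
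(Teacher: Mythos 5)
Your proposal is correct and its central computation---counting $n/2^{i+1}$ merges at rank $i$, each costing $\Theta(2^i)$, and summing over ranks $i=0,\ldots,k-1$ to get $\Theta(n\log n)$---is exactly the paper's own proof. The additional material (the mergesort order-invariance argument and the $k$-trailed per-insertion cross-check) only elaborates justifications the paper leaves implicit, so the approach is essentially the same.
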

\begin{proof}
There are $\frac{n}{2^(i+1)}$ merges for segments of rank $i $ with length $2^i$, for $i = 0$ to $(k-1)$, therefore the total number of comparisons, T(n), is given by
\[   
    T(n) = \Sigma_{i=0}^{k-1}\ \frac{n}{2^{i+1}}\times 2^i = O(n\log(n))
\] \qedhere
\end{proof}

\begin{corollary}[Amortized Cost of Insert]
Insertion of  $n$ values into a  BWA  takes  average  $O(\log(n))$ time per insert. 
\label{corollary:insert}
\end{corollary}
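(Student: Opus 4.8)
The plan is to derive this corollary as an immediate consequence of the total running-time bound established in Theorem~\ref{theorem:insertTime}. By definition, the amortized (average) cost per insertion over a sequence is the total running time of that sequence divided by the number of operations it contains. First I would recall that Theorem~\ref{theorem:insertTime} gives a total running time of $O(n\log(n))$ for inserting a sequence of $n = 2^k$ values into the BWA. Dividing this total by the number $n$ of insertions at once yields an average of
\[
\frac{O(n\log(n))}{n} = O(\log(n))
\]
per operation, which is exactly the claimed bound.

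The one point requiring a little care is that Theorem~\ref{theorem:insertTime} is stated only for sequence lengths that are exact powers of two, whereas the corollary should hold for an arbitrary number $m$ of insertions. To close this gap I would choose the next power of two $n$ with $n/2 < m \le n$ and observe that inserting $m$ values performs a \emph{subset} of the merges that inserting $n$ values performs: by the post-order dynamics of the BWA merge tree (Lemma~\ref{lemma:post_order}), filling only the first $m$ leaves triggers exactly those internal merge nodes both of whose subtrees lie within those $m$ leaves, and these are a subset of all internal nodes of the tree on $n$ leaves. Hence the total cost of any $m$ insertions is bounded above by the full cost $O(n\log(n))$, and since $n < 2m$ this is $O(m\log(m))$. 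Dividing by $m$ again gives the $O(\log(m))$ amortized cost for general $m$.

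I do not expect any genuine obstacle here. The essential content---that the total work across the whole sequence exceeds linear only by a logarithmic factor---has already been supplied by the global counting argument of Theorem~\ref{theorem:insertTime}, so the remaining step is purely the definition of amortized cost together with the elementary bookkeeping that passes from power-of-two sequence lengths to arbitrary ones. If anything, the mildest subtlety is making precise that an early-terminated insertion sequence costs no more than the completed one, which the post-order structure of the merge tree settles cleanly.
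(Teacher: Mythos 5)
Your proposal is correct and takes essentially the same route as the paper, which treats the corollary as an immediate consequence of Theorem~\ref{theorem:insertTime}: the $O(n\log(n))$ total cost divided by the $n$ insertions gives the $O(\log(n))$ amortized bound. Your extra step extending the bound from power-of-two sequence lengths to arbitrary lengths (via the observation that the merges triggered by a prefix of insertions form a subset of the merge-tree nodes, cf.\ Lemma~\ref{lemma:post_order}) is a sound refinement of a gap the paper silently glosses over, but it does not change the underlying argument.
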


This conclusion, while seemingly trivial,  is actually interesting in that it holds despite the fact that one can  infer from  the code in the Listing \ref{listing_insert} that the time taken
by an insert is $O(1)$ every other time an insertion is performed.

A search is said to be a {\it hit} if the searched value  is found in the BWA, otherwise, a {\it miss}. There can be some difference between
the time spent on hit and miss under certain conditions. In the following, we will make a distinction between hit and miss in some cases.

The search time obviously depends on the number and the ranks of the active segments in a BWA at the time of the search.  We will start the analysis with a simple case:

\begin{theorem}[Search Time, case $n=2^m$]
\label{theorem:search_perfect}
When the number of values in a BWA is a power of two of the form $n = 2^m$, the time to search for  a value in the BWA is $O(\log (n))$.

\end{theorem}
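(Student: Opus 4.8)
The plan is to reduce the search in this configuration to a single binary search over one sorted block, and then bound the surrounding bookkeeping by the number of segments. First I would invoke the Insert property, part (d): since $n = 2^m$ is a power of two, the variable \texttt{total} has exactly one nonzero bit, at position $m$, so exactly one segment is active, the segment of rank $m$, and it holds all $n$ values. Part (c) of the same property guarantees these values are stored in ascending order, so the block occupying indices $[S(m), T(m)] = [2^m, 2^{m+1}-1]$ is sorted and has exactly $2^m = n$ entries.

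Next I would trace \texttt{Search} (Listing \ref{prog:search}) on this state. Its loop runs over ranks $i = k-1$ down to $0$. For every rank $i \neq m$ the test \texttt{active(i)} returns false at cost $O(1)$, contributing at most $O(k)$ in total, while the only rank that triggers a call to \texttt{segSearch} is $i = m$. That single call performs, via \texttt{binarySearch} (Listing \ref{prog_binarySearch}), a binary search over the sorted block of $n$ entries, which costs $O(\log n)$ comparisons. A hit breaks out of the loop immediately; a miss lets the loop finish scanning the remaining inactive lower ranks. In either case exactly one binary search is executed, so hit and miss have the same asymptotic cost here.

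Adding the two contributions yields $O(k) + O(\log n)$. The binary-search term is $O(\log n)$ by construction. For the scanning term, I would note that a BWA holding $n = 2^m$ values must possess a segment of rank $m$, forcing $k \geq m+1$; taking the array to be sized commensurately with its contents (the natural choice $k = m+1$, or more generally $k = O(\log n)$) makes the scan cost $O(k) = O(\log n)$ as well, so the total is $O(\log n)$.

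The step I expect to require the most care is this last accounting: one must argue that scanning the $O(k)$ activity flags does not dominate the lone binary search, which rests on the array not being grossly over-allocated relative to $n$. The convenient feature of the power-of-two case, worth flagging because the later search theorems lose it, is that precisely one segment is ever active, so there is no gap between the hit and miss costs.
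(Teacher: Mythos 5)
Your proposal is correct and follows the same route as the paper's proof, which consists of the single observation that all $n$ values reside in the one active segment of rank $m$, so the search is a single binary search of cost $O(\log n)$. Your additional bookkeeping --- citing the Insert property to justify that exactly one segment is active and sorted, and verifying that scanning the $O(k)$ activity flags with $k = O(\log n)$ does not dominate --- merely makes explicit what the paper leaves implicit, and is sound.
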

\begin{proof}
In this case, all the values are in segment of rank $m$ of length $n$. 
\end{proof}
Theorem \ref{theorem:search_perfect}  holds true regardless whether the search is a hit or miss under the given condition. However, when $n$ is not a power of two, more than  one segment may need to  be searched. 
The worst case occurs when $n = 2^m-1$, and in this case, every segments of rank smaller than $m$ is active, and as such, could be potentially searched\footnote{We say potentially searched, for the search could be terminated before a segment is searched for the value was found in a segment of higher rank.}.

\begin{theorem}[Search Time (Hit)]
\label{theorem:search}
For a BWA with up to   $ n = 2^m $ values, the amortized  time of  a hit search is $O(\log n)$ , provided that the values are uniformly distributed.
\end{theorem}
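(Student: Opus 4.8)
The plan is to write the cost of a hit search as a sum of binary-search costs over the active segments actually examined, and then average this over a uniformly chosen stored value, using the occupancy lower bound to show that the probabilities of the expensive deep searches decay geometrically. First I would fix the configuration at the moment of the search. By the Insert property, if $n$ has weight $w$ the active segments carry distinct ranks $i_1 > i_2 > \cdots > i_w$, each active rank-$i$ segment stores its values in ascending order, and so it is searched by a binary search costing $O(1+i)$, i.e. $O(i)$ up to the $i=0$ base case. The Search procedure scans active segments from the highest rank downward; hence, if the target lies in the $j$th-highest active segment (rank $i_j$), the work performed is a missing binary search in each of ranks $i_1,\dots,i_{j-1}$ followed by a hit binary search in rank $i_j$, for a total
\[
C_j \;=\; \Theta\!\left(\sum_{l=1}^{j} i_l\right) \;\le\; \Theta(j\,m),
\]
where I use $i_l \le m$ for every $l$, which holds because $n \le 2^m$.

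Next I would take the expectation under the uniformity hypothesis. Reading ``uniformly distributed'' as the statement that each of the $N$ stored (non-VOID) values is equally likely to be the query, the probability that the target sits in the rank-$i_j$ segment is $n_j/N$, where $n_j$ is the number of real values there. The crux is a geometric bound on these probabilities. The capacity of a rank-$i_j$ segment gives $n_j \le 2^{i_j}$, whereas Corollary~\ref{property:delete} applied to the top segment gives $N \ge n_1 > 2^{\,i_1-1}$; since the ranks are distinct and strictly decreasing we have $i_j \le i_1-(j-1)$, and therefore
\[
\frac{n_j}{N} \;<\; \frac{2^{\,i_j}}{2^{\,i_1-1}} \;=\; 2^{\,i_j-i_1+1} \;\le\; 2^{\,2-j}.
\]
Combining this with the cost bound yields
\[
E[C] \;=\; \sum_{j=1}^{w}\frac{n_j}{N}\,C_j \;\le\; \Theta(m)\sum_{j=1}^{w} j\,2^{\,2-j} \;\le\; \Theta(m)\sum_{j=1}^{\infty} j\,2^{\,2-j},
\]
and since $\sum_{j\ge 1} j\,2^{-j}$ converges to a constant the entire sum is $O(m)=O(\log n)$, as claimed.

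I expect the main obstacle to be the geometric-decay estimate and, relatedly, pinning down exactly where the uniformity hypothesis is used. The decay of $n_j/N$ rests on two facts acting together: the trivial capacity bound controlling the numerator and the occupancy lower bound on the top segment (Corollary~\ref{property:delete}) controlling the denominator, plus strict monotonicity of the ranks to supply the exponent $2-j$. The uniform-distribution hypothesis is genuinely necessary and cannot be dropped: without it an adversary could repeatedly query a value lodged in the lowest-rank active segment, forcing every search to traverse all $w$ active segments and incur the worst-case $\Theta(\log^2 n)$ cost foreshadowed earlier in the text. Uniformity is precisely what makes the query land in the $j$th-highest segment with probability proportional to its population $n_j$, so that the costly deep searches are charged only with geometrically small weight, collapsing the $\Theta(\log^2 n)$ worst case to an $O(\log n)$ average.
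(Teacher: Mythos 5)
Your proof is correct, and it shares the paper's skeleton: condition on which active segment contains the target, observe that the probability of landing in the $j$th-highest active segment decays geometrically while the accumulated binary-search cost grows only like a sum of logarithms, and sum the resulting convergent series to get $O(\log n)$. The genuine difference lies in how the geometric decay is justified, and in what generality. The paper argues in the worst-case configuration $n = 2^m - 1$, where every segment is active and implicitly fully occupied; there the probability of the target lying in the $j$th segment is taken to be exactly $2^{-j}$ (probability proportional to segment length), and the weighted sum is evaluated by regrouping to give the clean bound $2(m-1)$. You instead allow an arbitrary set of active ranks $i_1 > \cdots > i_w$ and an arbitrary pattern of VOIDs, read uniformity as "the query is a uniformly random stored value," and derive the decay $n_j/N < 2^{2-j}$ from the capacity bound $n_j \le 2^{i_j}$ together with the Occupancy Lower Bound (Corollary \ref{property:delete}) applied to the highest active segment. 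This buys robustness: your bound holds verbatim for states hollowed out by deletions, a case the paper's proof idealizes away, at the price of a cruder per-configuration cost estimate ($C_j \le \Theta(jm)$ rather than the exact regrouped sum of logarithms) and hence a larger hidden constant. Your closing remark on why uniformity cannot be dropped --- adversarial queries pinned to the lowest active segment force $\Theta(\log^2 n)$ behavior, matching what the paper attributes to misses in Theorem \ref{theorem:miss} --- makes explicit a point the paper leaves implicit.
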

\begin{proof}
The probability that a values falls in a given segment of BWA is proportional to the length of the segment under the uniform distribution assumption.
The search algorithm starts the search from active segment with highest rank, which has a length half of the size of the BWA, and hence the probability of finding the value in it is $\frac{1}{2}$. The probability of finding the value in the next lower rank segment is $\frac{1}{4}$,  and, generally, the probability of finding the value in a segment of rank $i$ is half of that of rank  $(i+1)$. Note that finding a value in segment $i$ means that  all  the  segments of higher ranks have been searched,  so the cost for searching the higher ranked segments should also be charged together with the time taken to search the current segment.  The search is performed only over active segments, the worst case would be all the segments are active\footnote{This worst case
occurs  when $n = 2^k -1$, and in this case, all the least significant bits take the value of one.}, and in that case each segment will be searched until the value is found. 

It follows that the expected time  for a hit search in a BWA with $n$ stored values is 
\scriptsize
\begin{eqnarray*}
\tiny
 T(n)  &=& \frac{1}{2} \log(2^{m-1} ) + \frac{1}{4} (\log(2^{m-1})  + \log(2^{m-2}))+ \frac{1}{8} (\log(2^{m-1} )+ \log(2^{m-2}) + \log(2^{m-3})) + \cdots 
          + \frac{1}{2^{m-1}}\Sigma_{i=1}^{m-1} \log(2^{m-i}) \\
         &=& (\frac{1}{2} + \frac{1}{4} + \frac{1}{8} + \cdots + \frac{1}{2^m}) \log (2^{m-1})  \\
           &&    \hspace{0.5cm}         + (\frac{1}{4} + \frac{1}{8} + \frac{1}{16} +\ldots + \frac{1}{2^m})  \log (2^{m-2})  \\
           && \hspace{1.2cm}                + (\frac{1}{8} + \frac{1}{16} + \frac{1}{32} \cdots + \frac{1}{2^m}) \log (2^{m-3})+  \\
           & &   \hspace{1.9cm}               \ddots  \\
            & &      \hspace{2.5cm}   + \frac{1}{2^{m-1}} \log(2)        \\
          &\leq&  (m-1) + \frac{1}{2} (m-2) + \frac{1}{4} (m-3) + \cdots, + \frac{1}{2^{m-1}} \\
          &\leq & 2 (m-1) \\
          & =& O(\log(n) ) \qedhere
\end{eqnarray*}

\end{proof}

\begin{theorem}[Search Time (Miss)]
The search for a value that is not present in the BWA with up to $n = 2^m-1$ values takes $O(\log^2 (n))$  average time.
\label{theorem:miss}
\end{theorem}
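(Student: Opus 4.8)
The plan is to exploit the defining feature of a \emph{miss}: since the value is absent from the BWA, the loop in Listing~\ref{prog:search} never reaches its \texttt{break}, so \texttt{segSearch} is invoked on \emph{every} active segment, from the highest active rank down to rank $0$. This is exactly the point where the miss departs from the hit case of Theorem~\ref{theorem:search}: there, the search halts as soon as the value is located, and the geometric weighting of the termination rank collapses the nested sum to $O(\log n)$; on a miss no such early termination is possible, and one must pay in full for a binary search over each and every active segment.

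First I would fix the cost of a single segment search. By Listing~\ref{prog:segSearch} and Listing~\ref{prog_binarySearch}, searching a segment of rank $i$ amounts to a \texttt{binarySearch} over an index range of length $2^i$, costing $\Theta(i)$ comparisons (the rank-$0$ segment costing $O(1)$). Invoking the Insert property, which identifies the active ranks with the one-bits of \texttt{total}, the total time of a miss when the BWA holds $n$ values is
\[
   T_{\mathrm{miss}}(n) \;=\; \sum_{\,i\,:\,\mathrm{active}(i)} \Theta(i) \;=\; \sum_{\,i\,:\,b_i=1} \Theta(i),
\]
where $b_i$ denotes the $i$th bit of $n$.

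Next I would bound this sum. Every active rank satisfies $0 \le i \le m-1$, and there are at most $m$ of them, so the sum is largest when all of ranks $0,\dots,m-1$ are active, which happens precisely when $n = 2^m-1$ (all $m$ least significant bits equal one). In that worst case
\[
   T_{\mathrm{miss}}(n) \;\le\; \sum_{i=0}^{m-1} \Theta(i) \;=\; \Theta\!\left(\tfrac{m(m-1)}{2}\right) \;=\; O(m^2) \;=\; O(\log^2 n),
\]
which already establishes the worst-case bound. For the averaged statement I would regard the segment configuration as encoded by the bits of \texttt{total}: averaging over the values $n \le 2^m-1$ makes each bit $b_i$ equal one with probability $\tfrac12$, so by linearity of expectation the expected miss cost is $\tfrac12\sum_{i=0}^{m-1}\Theta(i) = O(\log^2 n)$, the same order with a smaller constant.

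There is no deep obstacle here; the one point requiring genuine care is the rigorous justification that a miss really does search all active segments with no early exit, since this is exactly what separates the quadratic-in-$\log$ behaviour of the miss from the linear-in-$\log$ behaviour of the hit in Theorem~\ref{theorem:search}. A minor secondary point is the boundary handling of \texttt{binarySearch} on very short ranges (ranks $0$ and $1$), but this only affects the $O(1)$ terms and leaves the $O(\log^2 n)$ asymptotics unchanged.
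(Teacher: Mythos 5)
Your proposal is correct and follows essentially the same route as the paper: both arguments rest on the observation that a miss forces a binary search (cost $\Theta(i)$) over \emph{every} active segment, identify the active ranks with the one-bits of {\tt total}, and average over the possible configurations in which each rank is active half the time, yielding roughly $\tfrac12\sum_{i=0}^{m-1}\Theta(i)=O(\log^2 n)$. Your version merely makes explicit the worst case $n=2^m-1$ and the linearity-of-expectation step that the paper leaves implicit.
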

\begin{proof}
There are n different possible configurations of the BWA. A segment of rank $i$  of length $2^i$ is active in half of the configurations. The total  time of the $n$ searches over all the different  configurations, divided by the total of $n$,  is then 
\begin{eqnarray*}
\tiny
          T(n) &=&   \frac{\frac{n}{2} (\log 2^{m-1} + \log 2^{m-2} + \cdots  +1)}{n}\\
                 &=& O(\log^2(n) )
\end{eqnarray*}
\end{proof}

It is obvious from Listing \ref{prog:delete} that the deletion of  an existing value in the BWA takes time no more than a search provided that the occupancy rate of the segment where the deletion takes place does not decrease to 50\% as a result of the operation. On the other hand, If a deletion leads the
occupancy rate to drop to the 50\% threshold, a demotion is kicked in, which may and may not be followed by a merge, in both cases, the time taken will be proportional to the length of the segment.

\begin{theorem}[Amortized Delete Time]
\label{theorem:delete}
The amortized cost of deleting a  value in the BWA is $O(\log(n))$.
\end{theorem}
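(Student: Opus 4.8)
The plan is to separate the cost of a successful \texttt{Delete} into its two components---the leading \texttt{Search} and the clean-up work triggered when a segment reaches the 50\% threshold---and to amortize only the second. As in the discussion preceding the statement, I take the deleted value to be present (otherwise the leading \texttt{Search} is a miss and one is in the regime of Theorem~\ref{theorem:miss}); consequently the \texttt{Search} at Line~2 of Listing~\ref{prog:delete} is a hit and costs $O(\log n)$ amortized by Theorem~\ref{theorem:search}. It then remains to bound the conditional block: at most one \texttt{demote} (Listing~\ref{prog:demote}) of some rank-$i$ segment, followed by at most one \texttt{merge} at rank $i-1$. Reading the two listings shows this block scans and copies $\Theta(2^{i})$ cells and performs \emph{no} recursive cascade, so its actual cost is $\Theta(2^{i})$ and each delete triggers at most one demotion.

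First I would set up an amortized argument with the potential
\[
  \Phi \;=\; \sum_{i\ \text{active}} 4c\left(\tfrac{3}{4}\,2^{i}-R_i\right)^{+},
\]
where $R_i$ is the number of non-VOID values in the active rank-$i$ segment, $(x)^{+}=\max(x,0)$, and $c$ is a constant chosen to dominate the per-cell cost of \texttt{demote} and \texttt{merge}. This $\Phi$ is nonnegative, equals $0$ in the empty structure, and vanishes on any segment of occupancy at least 75\%, while a segment sitting exactly at the 50\% threshold ($R_i=2^{i-1}$) contributes precisely $c\,2^{i}$. A delete that does not demote removes one non-VOID value from a single active segment, hence raises $\Phi$ by at most $4c=O(1)$; together with the $O(\log n)$ search this gives amortized cost $O(\log n)$ in the common case.

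The crux is the demoting delete, where the rank-$i$ segment is at $R_i=2^{i-1}$ and so contributes $c\,2^{i}$ to $\Phi$, and I would check the two branches of Listing~\ref{prog:delete}. If rank $i-1$ is inactive, \texttt{demote} installs the $2^{i-1}$ survivors as a completely full (100\%) rank-$(i-1)$ segment, whose $\Phi$-contribution is $0$. If rank $i-1$ is active, the Occupancy Lower Bound (Corollary~\ref{property:delete}) gives it more than $2^{i-2}$ non-VOID values, so the merge produces a rank-$i$ segment with strictly more than $3\cdot 2^{i-2}=\tfrac{3}{4}2^{i}$ survivors---exactly the content of the Occupancy Rate theorem---and its $\Phi$-contribution is again $0$. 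In both branches the $c\,2^{i}$ carried by the demoted segment is released (any potential on the consumed rank-$(i-1)$ segment only helps), so $\Delta\Phi\le -c\,2^{i}$; choosing $c$ so that $c\,2^{i}$ exceeds the actual $\Theta(2^{i})$ work makes the amortized cost of the demotion at most the search term, $O(\log n)$, which is the claimed bound.

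I expect the main obstacle to be the interaction with \texttt{Insert} created by VOID values. Because an insert merges the (possibly VOID-laden) active white segment with a black segment, it can produce an active rank-$i$ segment whose occupancy is only marginally above 50\%, and such a segment is born already carrying close to $c\,2^{i}$ units of $\Phi$. The argument survives only if this ``pre-payment'' is charged to the creating insert rather than to the deletes, and I must verify it does not break the insert bound: the total positive $\Delta\Phi$ contributed by inserts is at most a constant times the total merge work, which is $O(n\log n)$ by Theorem~\ref{theorem:insertTime}, so inserts remain $O(\log n)$ amortized; since $\Phi\ge 0$ throughout and starts at $0$, the per-delete amortized bound is then legitimate. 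A minor loose end to dispatch separately is the handful of lowest ranks with $2^{i}=O(1)$, whose demotions cost only $O(1)$ and require no amortization at all.
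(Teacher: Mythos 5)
Your proof is correct, but it takes a genuinely different route from the paper's. The paper's own proof is a short aggregate (averaging) argument: it considers a rank-$k$ segment of length $n=2^k$ at full occupancy, notes that $n/2$ deletions, each costing an $O(\log n)$ search, must precede the single demotion, and spreads the $O(n)$ demote-plus-merge cost over those $n/2$ deletions, giving $T(n)=\bigl(O(\tfrac{n}{2}\log n)+O(n)\bigr)/\tfrac{n}{2}=O(\log n)$. You instead run the potential method, with $\Phi$ calibrated so that a segment sitting at the $50\%$ threshold carries exactly $c\,2^i$ of credit, and you verify via the Occupancy Rate theorem and Corollary~\ref{property:delete} that both branches of a demotion land at zero potential, so the released credit pays the $\Theta(2^i)$ work. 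What your route buys is rigor precisely where the paper is loose: the paper's averaging tacitly assumes each active segment starts its deletion history at $100\%$ occupancy, but, as you observe, an insert cascade that merges VOID-laden segments can create a segment barely above $50\%$, whose very next delete triggers a $\Theta(2^i)$ demotion with nothing in a pure per-segment average to charge it against; your potential is pre-paid by the creating insert, and you check that this surcharge is dominated by the merge work, hence compatible with Theorem~\ref{theorem:insertTime}, so the insert bound survives. Consequently your argument is valid for arbitrary interleavings of inserts and deletes, a claim the paper's two-line computation does not actually establish; the price is the extra machinery of the potential function and the insert-side accounting, which the paper avoids by analyzing only a worst-case deletion run within one segment.
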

\begin{proof}
Consider a segment S of rank $k$, and hence its  length is  $n = 2^k$, half of the values are deleted one by one without demotion. We know that for each of the deletions, a 
search is required, which takes $O(\log(n))$ time. 
The last one will take time proportional to $n$ since it will invoke the demotion process, and possibly followed by a merge of rank $(k-1)$. The amortized time of the deletions from the time the occupancy rate is one to the time when the demotion is completed, then, is

\scriptsize
\begin{eqnarray*}
 T(n)= \frac{{O(\frac{n}{2} \log(n))  +O( n)}}{\frac{n}{2}}= O(\log(n))
\end{eqnarray*}
\end{proof}

The above  theorem showed that the delete operations  does not bring   the amortized asymptotic complexity of the operation to go beyond  that   of a search for the corresponding value that is  present in the BWA.  Together with Theorem \ref{theorem:search_perfect}, \ref{theorem:search}, \ref{theorem:miss}, we may conclude that a  sequence  of operations with mixed  search and delete operations with hit ratio\footnote{Referring  to the ratio of the number of hits among the total number of operations.} between 1 and 0, will demonstrate amortized complexity somewhere 
between $O(\log(n))$ and $O(log^2(n))$. Therefore, a higher probability for the searched or deleted values to be  present in the data set generally leads to a lower amortized cost of the operations.

Finally, let us examine the {\it space efficiency} of the BWA, which is defined as  the ratio between the space used by the data and the total space used. 
In a BWA, the black array is used as  the scratch space for the BWA operations.
The space taken by the black array is half
of that  used by the white  array, which is fully used by the real data.   The space efficiency of BWA is therefore $\frac{1}{1.5}  = \frac{2}{3} \approx 0.667$,  which is  much higher than that of any tree based data structures\footnote{In tree based dynamic date structures,  each node of the tree has to hold pointers to its children and parent as well as a piece of data.} for dynamic data sets.

\subsection{Implementation and Performance}
\label{subsec:benchmark}

The BWA has been implemented in  C++ as a class. The size of the BWA $N$, the logarithmic $k$ of $N$, the occupancy vector $V$, where $V[i]$ shows the 
number of non-VOID values  in segment of rank $i$, are all private members of the class. The operations, Insert, Search, Delete,  are made public, whereas merge, demote, and some other auxiliary  functions are private. The programs in C++  follow closely in structure and style of the code   presented  in Section \ref{sec:operations}.

The testing of the performance  was conducted on a Mac Pro laptop computer running Mojave MacOS. The system comes  with a i7 processor with   six cores, clocked  at 2.6 GHz. It has  9 MB L3 cache, and 256 KB L2 cache per core. At the time to start the testing, the free memory is maintained at  the level between four to six GB, whereas  the CPU is about 90\%  idle.  The values used in the tests are randomly generated with a range beyond  the size  of  the BWA. The chrono time library \cite{chrono}, which comes with C++ 11, is used for the timing.

The tests are conducted for BWAs with size N  ranges between $2^{10} = 1,024$ and $2^{29} =536,870,912$. 
The testing values are four byte integers. Hence, the space that the BWA itself takes up during the tests will go up to 3 GB. The random values used during the  tests are pre-generated and being held an array of size up to 2 GB. The Performance Monitor of the Mac system indicated that  the testing program, as an application, took around 99\% of the computing  time of one core, and 10 GB of system memory.

\begin{table}
\tiny
\begin{center}
 
 \begin{tabular}{cc}

\begin{tabular}{l l l l | l l }
\hline\hline
Size & Insert & Search & Delete & Search* & Delete*  \\ [0.5ex]
\hline
$2^{10}$ & 105.949 &142.289   &239.594 & \ 540.43 & \ 576.862\\
$2^{11}$ &\ 99.254     &139.51   & 254.676   & \ 619.694& \ 660.022\\
$2^{12}$ &110.299  &149.991  & 271.268  & \ 710.861& \ 771.738\\
$2^{13}$ & 119.603    & 158.841   & 292.494 & \ 788.732& \ 881.442\\
$2^{14}$ & 128.13      & 162.641 & 324.852  & \ 880.209& \ 954.138\\ 
$2^{15}$ & 135.945    & 173.694 & 350.686 & \ 994.758& \ 1091.76 \\
$2^{16}$ & 144.256  & 185.698   & 390.012 & 1155.82 & 1244.09\\
$2^{17}$ & 152.413  & 215.686   & 461.078 & 1293.16 & 1391.76\\
$2^{18}$ & 160.31  & 222.679 & 478.018 & 1461.91& 1497.18\\
$2^{19}$ & 162.527      & 239.829  & 546.856 & 1640.62& 1735.51\\
\hline 
\end{tabular}
&
\begin{tabular}{l l l l | l l }
\hline\hline
Size & Insert & Search & Delete & Search* & Delete*  \\ [0.5ex]
\hline
$2^{20}$ &182.601 &\ 296.176 & \ 611.42 & 1845.1 & 1955.55\\
$2^{21}$ & 193.025  &\ 373.591   & \ 664.332  & 2026.22 & 2211.12\\
$2^{22}$ & 200.564  &\ 540.384   & \ 730.102  & 2350.39 & 2485.58\\
$2^{23}$ & 211.788   &\  644.83   & \ 789.652 & 2717.32 & 2823.35\\
$2^{24}$ & 213.38   &\ 724.457 & \ 851.47 &  3062.46 &  3249.46\\ 
$2^{25}$ & 222.911    &\ 799.269 &\  911.798 & 3542.61 & 3631.58\\
$2^{26}$ & 235.683  &\ 876.615  &\  944.468 & 4056.0 & 4110.58\\
$2^{27}$ & 239.855 &\ 968.382  & 1061.95& 4702.94 & 4593.06\\
$2^{28}$ & 249.329  & 1087.05 & 1159.89 & 4886.36 & 5231.23\\
$2^{29}$ & 263.82     & 1246.46   & 1145.87& 4960.39 & 5720.49\\
\hline 
\end{tabular}

\end{tabular}
 
\end{center}
\label{table:timing}
\caption{Amortized time (ns) of operations with perfect/random(*) BWA configurations}

\end{table}

The performance data for Insert, Search, and Delete operations gathered during the tests are  listed in Table \ref{table:timing} and plotted in Figure \ref{fig:timing}.
The horizontal axis, representing  the size of the BWA, is 
scaled logarithmically, whereas the vertical axis, showing the time that operations take in the unit of nanosecond, is linearly scaled. Hence, a logarithmic function will show up as a straight line in such a figure.

The tests are conducted for two different cases. In the first case, the  state variable  {\tt total}  is a perfect power of  two, and in the other, a random number ranging up to the size of the BWA. It follows that the number of active segment  equals  one in the former case, and  equals to the weight of the random number in the latter case. For the latter case,  the average time of  a thousand  runs is taken, each of which is on a different
configuration of the BWA as captured by the state variable {\tt total}. The left and right parts of Table \ref{table:timing} showed the performance for the two cases respectively, which are in turn respectively plotted in Figure \ref{fig:timing}(a) and \ref{fig:timing}(b).

A  comparison of  Figure \ref{fig:timing}(a) and \ref{fig:timing}(b) shows that the Search and Delete operations over random configurations indeed cost more than that over the \enquote{ideal} configurations when 
the state variable {\tt total} happens to be a power of two.  In contrast, the Insert operation is insensitive to the difference of configurations, therefore, its  timings are identical in  the two figures. This should come as no surprise for it is simply a manifestation of  Corollary \ref{corollary:insert}.

\begin{figure}[h]

\begin{center}
\begin{tabular}{cc}

\includegraphics[scale=0.47]{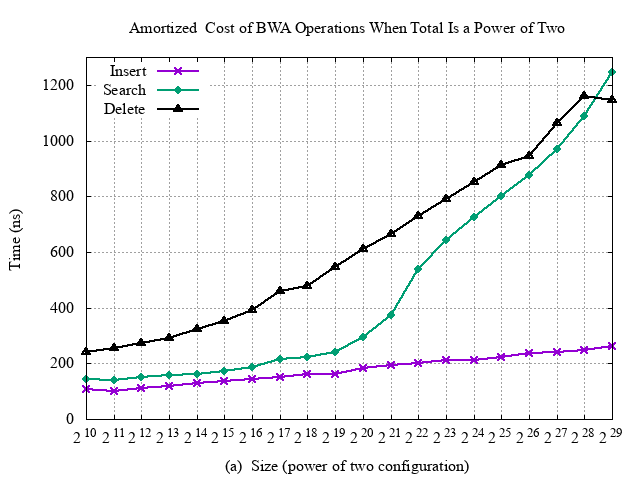} &
\hspace{-1.5em}
\includegraphics[scale=0.47]{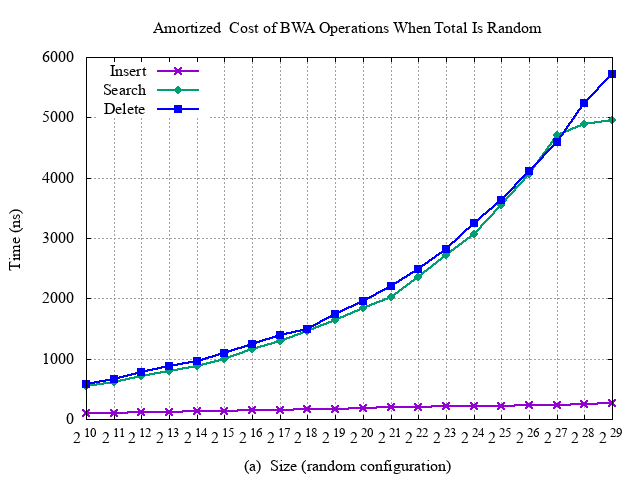} 
\end{tabular}
\end{center}

\caption{Insert, Search  and Delete time per operation (C++ on MacPro)}
\label{fig:timing}
\end{figure}

As shown in Table \ref{table:timing} and Figure  \ref{fig:timing}, the amortized time  taken by the operations in a BWA up to the size of $2^{29}$ is from 263.82 to  5720.49 nanoseconds in all the cases.

Observe that whenever the number of inserted values is a $k$-th power of two,  all the inserted value will appear in the BWA as one sorted sequence 
in the segment of rank $k$. It is therefore sensible to compare the performance data with other sorting algorithms. As pointed out in Lemma \ref{lemma:post_order},  and shown in Figure \ref{fig:bwa_tree} and \ref{fig:merge_tree}, the BWA merge tree  is isomorphic to that of  
mergesort with only a differences in the order the nodes are traversed. It follows that  the Insert of BWA is at least as fast as  that in mergesort 
per operation. Hence it should come as no surprise to see that the amortized cost went from 105.949 to  only 263.82 nanoseconds as
the number of inserted values increased  $2^{19}$  folds from 1,024 to 536,870,912 with  a fairly flat slop  as shown in Figure \ref{fig:timing}.

\section{Augmentations}
\label{sec:augmentations}

We have  shown that  black-bhite  array can be deployed as a data structure for dynamic set with Insert, Search and Delete operations. In this section, we outline some fairly straightforward  augmentations  to its functionality,  including

\begin{itemize}
   \item {\tt Max(Min)} : returns the maximum (minimum) value;
   \item {\tt ExtractMax(Min)}: Find and delete the maximum (minimum) value;
   \item {\tt LowerBound} ({\tt UpperBound}) ($v$): finds  the position  of the smallest  (largest) number which is greater (smaller) than value $v$;  
   \item {\tt Interval} $(s, t)$: Returns all the values falling between the interval of $[s, t]$;
   \item {\tt IncrementalSort}: incrementally sorts a sequence of unordered values, one at a time.
 \end{itemize}
 
 It is the property of a BWA that the maximum (minimum) value in a segment of rank $m$ is always at the top (bottom)\footnote{Meaning the position with the highest (the lowest) index.} of the segment with index of $(2^{m+1} -1) (2^m)$ provided
 that is not a VOID entry, otherwise it is to be found at the index closest to the top (bottom)  on the left (right) where the entry is not a VOID. To find the global maximum, we just need to go through all the active segments,  compare and throw out the smaller value each time a new segment is examined.  It follows that Max, Min,  ExtractMax, and ExtractMin operations can all be done in  $O(\log(n))$ time\footnote{When the number of values in the BWA, or in other words,  the state variable {\tt total} is a power of 2, the time will become O(1).}. In applications such as $k$ nearest neighbors \cite{knn}, the ExtractMin operations can be applied $k$ times.

 The LowerBound (UpperBound) can be achieved with  fairly simple modifications of the Listings \ref{prog:search}, \ref{prog:segSearch} and \ref{prog_binarySearch}. The results will be the LowerBound (s) and UpperBound (t) indices in each of the active segments, if any\footnote{The values in a segment may and may not overlap with the interval [s, t].}. Interval (s,t) can then be applied to pull out all the values in between. 
 
 We have shown that a sequence of length $n$ can be inserted one by one in  $O(n \log  (n))$ time (Section \ref{subsec:analysis}).  BWA can then be  used as an  incremental sorting algorithm at least as efficient as a well-implemented mergesort in time and space. 
 
 There are other augmented operations for BWA with practical applications, which will be discussed in separate works.

\section{Conclusion}
\label{sec:conclusion}

The black-white array (BWA) is presented as an array based data structure for dynamic data set  supporting effectively and efficiently the insert, search, and delete operations, with mathematically elegant structure, easy-to-understand code, and competitive performance against those based on linked structures such as lists and trees.  

The performance data of a C++ implementation of the BWA were presented, which validated the formal analysis of the amortized costs of the operations. The  nanosecond  and low microsecond wall-clock time per operation  suggests that  the random access nature of the underlying array structure  indeed can minimize the constant factor in front of the predicated asymptotic functions. When it comes to space utilization, BWA is far more efficient than any of  those list or tree based approaches by a wide margin.

The versatility of the BWA structure also allows  it to easily support augmented operations such as Max,  Min,  Interval, and  IncrementalSort. 
Hence, it subsumes in functionality some other data structures and algorithms e.g. priority queue and  mergesort. As an alternative to those weill-known list or tree based data structures for dynamic data sets such as skip lists \cite{redis,skiplist} and red-black tree \cite{aho74}, BWA can be deployed in a broad range of applications. One of the use cases,  which in part motivated this work, is  to apply it to point cloud registration with Iterative Closest Point (ICP) method as described  in \cite{mou-201b,mou-202}.

\bibliography{mypub.bib,bib20.bib}

\begin{thebibliography}{10}

\bibitem{aho74}
A.~V. Aho, J.~E. Hopcroft, and J.~D. Ullman.
\newblock {\em The Design and Analysis of Computer Algorithms}.
\newblock Addision-Wesley, 1974.

\bibitem{btree}
R~Bayer and E~McCreight.
\newblock Organization and maintaenance of large ordered indices.
\newblock In {\em The Boeing Scientific Research Libraries}, volume
  doi:10.1145/356770.1734671, 3 1970.

\bibitem{corn90}
T.~H. Corman, C.~E. Leiserson, and R.~L. Rivest.
\newblock {\em Introduction to Algorithms E. and Ullman, J. D.}
\newblock MIT Press, 1990.

\bibitem{avl}
Adelson-Velsky Georgy and Landis Evgenii.
\newblock An algorithm for the organization of information (in {R}ussian).
\newblock In {\em Processings of the USSR Academy of Science}, pages
  146:263--266, 1962.

\bibitem{rbtree}
Leonidas.J. Guibas and Robert Sedgewick.
\newblock A dichromatic framework for balanced trees.
\newblock In {\em Annual Symposium on Foundations of Computer Science}, volume
  doi:10.1109, pages 8--21, 3 1978.

\bibitem{icp}
Wikipedia Iterative Closest~Point.
\newblock https://en.wikipedia.org/wiki/iterative{\_}closest{\_}point.

\bibitem{kdtree}
Wikipedia k-d tree.
\newblock https://en.wikipedia.org/wiki/k-d{\_}tree.

\bibitem{knn}
Wikipedia KNN.
\newblock https://en.wikipedia.org/wiki/k-nearest{\_}neighbors{\_}algorithm.

\bibitem{pcl}
Point~Cloud Library.
\newblock http://www.pointclouds.org.

\bibitem{mou-10pb}
Zhijing~George Mou.
\newblock Methods and system for multi-dimensional and geographic search.
\newblock In {\em Pub. No.: US2010/0174721 A1}. USPTO, Jul. 8 2010.

\bibitem{mou-13pb}
Zhijing~George Mou.
\newblock Methods and system for multi-dimensional and geographic search.
\newblock In {\em Patent No.: US 8,510,310 B2}. USPTO, Aug 13 2013.

\bibitem{mou-18pct}
Zhijing~George Mou.
\newblock Methods and systems for real time 3-d space search and point cloud
  registration using a dimensional shuffle.
\newblock In {\em WO 2018/183754 A1}. World Intellectural Property
  Organization, International Bureau, October 4 2018.

\bibitem{mou-19pa}
Zhijing~George Mou.
\newblock Methods and systems for real time 3-d space search and point cloud
  registration using a dimensional shuffle.
\newblock In {\em Pub. NO. US2019/0130525 A1}. USPTO, May 2 2019.

\bibitem{mou-201b}
Zhijing~George Mou.
\newblock Methods and systemsfor real time 3-d space search and point cloud
  registration using a dimensional shuffle.
\newblock In {\em Patent No.: US 10,580,114 B2}. USPTO, March 3 2020.

\bibitem{mou-202}
Zhijing~George Mou.
\newblock Methods and systemsfor real time 3-d space search and point cloud
  registration using a dimensional shuffle.
\newblock In {\em CIP of US PAT 10,580,114 B2}. USPTO, March 2020.

\bibitem{pointcloud}
Wikipedia Point Cloud~Registration.
\newblock http://en.wikipedia.org/wiki/point{\_}set{\_}registrationg.

\bibitem{redis}
RedisLabs.
\newblock https://redis.io/topics/data-types.

\bibitem{chrono}
C++~11 Reference.
\newblock http://www.cplusplus.com/reference/chrono/.

\bibitem{samet}
Hanan Samet.
\newblock {\em Foundations of Multidimensional and Metric Data Structures}.
\newblock Morgan Kaufmann Publisher, 2006.

\bibitem{sedgewick}
Robert Sedgewick.
\newblock {\em Algorithms in C++}.
\newblock Addision-Wesley, 1992.

\bibitem{skiplist}
Wikipedia Skip~List.
\newblock https://en.wikipedia.org/wiki/skip{\_}list.

\end{thebibliography}
\bibliographystyle{plain}

\end{document}